\newtheorem{lemma}{Lemma}
\newtheorem{corollary}{Corollary}
\newtheorem{theorem}{Theorem}
\newtheorem{remark}{Remark}
\let\MYcaption\@makecaption
\let\@makecaption\MYcaption
\begin{document}
\bstctlcite{IEEEexample:BSTcontrol}
\title{\LARGE Probability of Pilot Interference in Pulsed Radar-Cellular Coexistence: Fundamental Insights on Demodulation and Limited CSI Feedback}
\author{Raghunandan M. Rao,~\IEEEmembership{Student Member,~IEEE,} Vuk Marojevic,~\IEEEmembership{Senior Member,~IEEE,} and Jeffrey H. Reed,~\IEEEmembership{Fellow,~IEEE}
\thanks{Raghunandan M. Rao and Jeffrey H. Reed are with Wireless@VT, Bradley Department of ECE, Virginia Tech, Blacksburg, VA, 24061, USA (email: \{raghumr, reedjh\}@vt.edu).}
\thanks{Vuk Marojevic is with the Department of ECE, Mississippi State University, Mississippi State, MS, 39762, USA (e-mail: vuk.marojevic@ece.msstate.edu).}
\thanks{The support of the U.S. National Science Foundation (NSF) Grants CNS-1564148 and CNS-1642873 are gratefully acknowledged.}}


\maketitle
\vspace{-30pt}
\begin{abstract}
This paper considers an underlay pulsed radar-cellular spectrum sharing scenario, where the cellular system uses pilot-aided demodulation, \textit{statistical} channel state information (S-CSI) estimation and limited feedback schemes. Under a realistic system model, upper and lower bounds are derived on the probability that \textit{at least} a specified number of pilot signals are interfered by a radar pulse train in a \textit{finite CSI estimation window}. Exact probabilities are also derived for important special cases which reveal operational regimes where the lower bound is achieved. Using these results, this paper (a) provides insights on \textit{pilot interference-minimizing} schemes for accurate coherent symbol demodulation, and (b) demonstrates that pilot-aided methods fail to accurately estimate S-CSI of the \textit{pulsed radar interference channel} for a wide range of radar repetition intervals.
\end{abstract}
\vspace{-10pt}
\begin{IEEEkeywords}
\noindent Pilot-aided CSI, Probability of Pilot Interference, Radar-Cellular Coexistence, Limited CSI Feedback.
\end{IEEEkeywords}

\IEEEpeerreviewmaketitle
\vspace{-8pt}
\section{Introduction}
Over the last decade, radar-cellular spectrum sharing has been actively pursued by academia and industry, due to its high potential in maximizing spectral utilization of heavily congested sub-6 GHz frequency bands. Due to the support for underlay spectrum sharing in radar-incumbent 1.3 GHz \cite{WhiteHouse_1point3_GHz_Consideration}, 3.5 GHz \cite{FCC_3point5_GHz_Rules}, and 5 GHz \cite{FCC_5_GHz_FirstOrder} bands, cellular technologies have progressed from licensed bands to unlicensed and shared bands through standards such as License Assisted Access (LAA) and 5G New Radio-Unlicensed (NR-U) \cite{5G_Evolution_IEEEAccess_2019}. Therefore, evaluating the impact of radar interference on cellular signals is important for network providers. In particular, pulsed radar systems occupying these bands \cite{Rec_ITU_R_M_1465_3} can potentially interfere with control channels of the cellular signal, thus disrupting critical functionalities of the cellular network. 

Pilot/Reference signals are used in modern cellular systems to estimate the instantaneous channel state information (I-CSI), and statistical CSI (S-CSI) of the wireless channel. Due to practical considerations, I-CSI is used at the receiver for channel equalization and coherent demodulation, S-CSI is leveraged at the transmitter to choose the optimal transmission mode for data blocks in subsequent time slots \cite{Qiu_DL_Precode_Stat_CSI_TVT_2018}, \cite{Liu_Geraci_ICSI_HowMany_TWC_2019}. Frequency division duplex (FDD) systems quantize \textit{pilot-aided S-CSI estimates} at the receiver and feed the information back to the transmitter using `limited feedback' schemes \cite{Limited_FB_Love_JSAC_2008}. This methodology is used in NR, where I-CSI is estimated using demodulation reference signals (DMRS), and S-CSI estimates are based on CSI-reference signals (CSI-RS) \cite{3GPPRel15_138_214}.

Pilot interference due to pulsed radar signals impact the accuracy of CSI estimates. It has been demonstrated that pilot-aided I-CSI estimates are corrupted by pilot interference \cite{Karlsson_TDD_Massive_MIMO_Jam_2017}. In contrast, pilot interference is desirable for S-CSI acquisition, since pilot-aided S-CSI estimates account for the interference only when fading and interference statistics are the same on pilot and non-pilot resources. The authors of \cite{Safavi_Roy_ICC_SINR_Nrwbnd_rad_2015} reported degraded turbo decoder performance in the case of pulsed radar-LTE spectrum sharing scenarios due to inaccurate interference estimates, resulting in block decoding failures. The authors of \cite{rao2019analysis} considered the LTE downlink impaired by structured \textit{non-pilot interference}, and demonstrated the inaccuracy of pilot-aided SINR estimates, which resulted in significant degradation of throughput and latency performance due to link adaptation failure.

Unlike conventional multi-cellular scenarios where the interference statistics is homogeneous on all resources, for the radar-cellular coexistence scenarios considered in this letter, the radar is \textit{pulsed and periodic} in nature. Hence, the cellular channel is bimodal with two states: (a) \textit{interference channel}, on data blocks impaired by pulsed radar interference as well as fading, and (b) \textit{fading channel}, on data blocks that are impaired only by fading. While it is desirable to acquire I-CSI using pilots in the fading channel state, it is necessary to acquire S-CSI for \textit{both channel states} to maximize cellular performance using link adaptation and scheduling. 

For robust link adaptation, estimating the S-CSI of the \textit{interference channel} is fundamentally important to maximize performance of the cellular link, as well as to minimize interference to the radar \cite{Liu_Geraci_ICSI_HowMany_TWC_2019}, \cite{rao2019analysis}. However, since pulsed radar interference is time-selective, the absence of pilot interference can result in inaccurate pilot-aided S-CSI estimates of the interference channel.

Before we investigate the effectiveness of pilot-aided S-CSI and I-CSI estimation methods which were not designed for pulsed radar-cellular coexistence, we need to characterize the \textit{probability of pilot interference}. While an exact analysis can be done by considering a finite radar pulse width \cite{Clarkson_Perkins_InterceptTime_TIT_1996}, the resulting expression involving recurrence relations does not facilitate intuitive interpretation. To remedy this, we use a realistic \textit{infinitesimal wideband radar pulse} model that allows us to derive the bounds as a rational function of the waveform parameters, and then prove the achievability of the lower bound. These results lead to important insights regarding the effectiveness of pilot-aided I-CSI/S-CSI estimation, and limited S-CSI feedback.
\subsubsection*{Contributions}
We consider underlay pulsed radar-cellular spectrum sharing, where the radar waveform and cellular pilots are modeled as independent pulse trains with a random initial offset, having different pulse widths and repetition rates. The cellular system employs two different pilot signals, one for S-CSI acquisition and the other for I-CSI estimation. The cellular receiver performs S-CSI estimation using multiple equispaced pilots in a finite \textit{estimation window} and uses \textit{limited S-CSI feedback} to aid in scheduling and link adaptation at the transmitter. Also, the receiver estimates the I-CSI using a different equispaced pilot sequence, for coherent data demodulation  \cite{3GPPRel15_138_214}. Under this model, we derive upper and lower bounds on the probability that a pulsed radar with an infinitesimal pulse width and uniformly distributed time of arrival \cite{Clarkson_Perkins_InterceptTime_TIT_1996} interferes with (a) at least one pilot-bearing OFDM symbol (henceforth referred to as a pilot signal/pilot), and (b) more than $m$ 
pilot symbols, in an \textit{arbitrary estimation window}. We briefly discuss the impact of multipath on the probability bounds, and then derive exact expressions for important special cases where the lower bound is achieved. Using these results, pilot interference-minimizing radar schemes for accurate I-CSI estimation can be obtained. A key insight for pulsed radar-5G coexistence scenarios is that for S-CSI acquisition of the \textit{interference channel}, blind methods need to augment pilot-aided methods for a wide range of radar repetition intervals.

\section{System Model} \label{System_Model}
We consider an underlay radar-cellular spectrum sharing scenario, where the orthogonal frequency division multiplexing (OFDM)-based cellular signal has a symbol duration of $T_\mathtt{ofdm}$. The pulsed radar system has a repetition interval of $T_\mathtt{rep}$, where $T_\mathtt{rep} > T_\mathtt{ofdm}$. Therefore, an OFDM symbol is interfered by at most one radar pulse\footnote{If $T_\mathtt{rep} \leq T_\mathtt{ofdm}$, then each OFDM symbol will be interfered by the radar, and the probability of pilot interference will be $1$.}. Typical high bandwidth radar pulse widths $(T_\mathtt{pulse})$ satisfy  $T_\mathtt{pulse} \ll T_\mathtt{ofdm}$\footnote{In sub-6 GHz systems, typical radar systems have $T_\mathtt{pulse} \sim 1\ \mu$s \cite{Rec_ITU_R_M_1465_3}, while typical values of $T_\mathtt{ofdm} \sim 70\ \mu$s \cite{3GPPRel15_138_214}.}. Hence, we assume that $T_\mathtt{pulse}\rightarrow 0$ and that the radar can be represented by a periodic impulse train, as shown in Fig. \ref{Fig1_Rad_pil}. 

The cellular system employs pilot-aided CSI estimation techniques, where $T_\mathtt{pil}$ denotes  the temporal spacing between pilots. For example, $T_\mathtt{DMRS}$ denotes the DMRS spacing and $T_\mathtt{CSIRS}$ denotes the CSI-RS spacing. 
Even though we focus on \textit{pilot-aided statistical CSI (S-CSI) estimation}, this analysis is general and also applicable to pilot design for optimizing I-CSI (I-CSI) acquisition, as discussed in section \ref{Sec4_Implications}. S-CSI is estimated for each pilot-bearing OFDM symbol in the estimation window of interest denoted by $[0,T_{CSI}]$, where $T_{CSI} = N_p T_\mathtt{pil} = N_\mathtt{ofdm} T_\mathtt{ofdm}$ for $N_p, N_\mathtt{ofdm} \in \mathbb{N}$ and $1 \leq N_p < N_\mathtt{ofdm}$. Here, $N_p$ is the number of pilots, and $N_\mathtt{ofdm}$ the total number of OFDM symbols in the estimation window. The estimated S-CSI using the $l^{th}$ pilot ($\mathtt{CSI}_l$) is mapped to the achievable rate\footnote{LTE and NR define the quantized S-CSI values, how they are fed back, and the S-CSI-to-throughput mapping function $r(\cdot)$ \cite{3GPPRel15_138_214}.} $R_l= r(\mathtt{CSI}_l)$ using a non-zero real-valued function $r(\cdot)$. Defining $\mathbf{R} \triangleq [R_1,R_2,\cdots,R_{N_p}]^T$ as the vector of achievable rates estimated by the receiver, we consider two S-CSI feedback schemes $Q(\mathbf{R})$, given by:
\begin{enumerate}
	\item Minimum S-CSI, calculated using $Q_\mathtt{min}(\mathbf{R}) = \min(\mathbf{R})$,
	\item Window-averaged S-CSI \cite{Qiu_DL_Precode_Stat_CSI_TVT_2018}, calculated using $Q_\mathtt{avg}(\mathbf{R}) = A(\mathbf{R})$, where $A(\cdot)$ is a window-averaging function \cite{Qiu_DL_Precode_Stat_CSI_TVT_2018}.
\end{enumerate} 
As a first-order approximation, a pilot-aided S-CSI estimate of the \textit{interference channel} is considered to be accurate if the pilot is affected by interference. The maximum number of radar pulses that occur in the estimation window is $N_r = \lceil T_{CSI}/T_\mathtt{rep} \rceil$, where $\lceil \cdot \rceil$ denotes the ceiling function. Since typical cellular systems continuously transmit pilot signals for CSI acquisition, and pilot interference is the event of interest, we consider the pilot start and end times to be deterministic. We consider a finite estimation window in which pilot signals occupy the time intervals $[kT_\mathtt{pil}, kT_\mathtt{pil} + T_\mathtt{ofdm}]$ for $k=0,1,\cdots,(N_p-1)$. Due to deterministic pilot intervals, the time of arrival (ToA) of the first radar pulse $t_f$ is assumed to be uniformly distributed, i.e. $t_f \sim \mathtt{U}([0,T_\mathtt{rep}])$ \cite{Clarkson_Perkins_InterceptTime_TIT_1996}. 

\begin{figure}[t]
	\centering
	\includegraphics[width=4.5in]{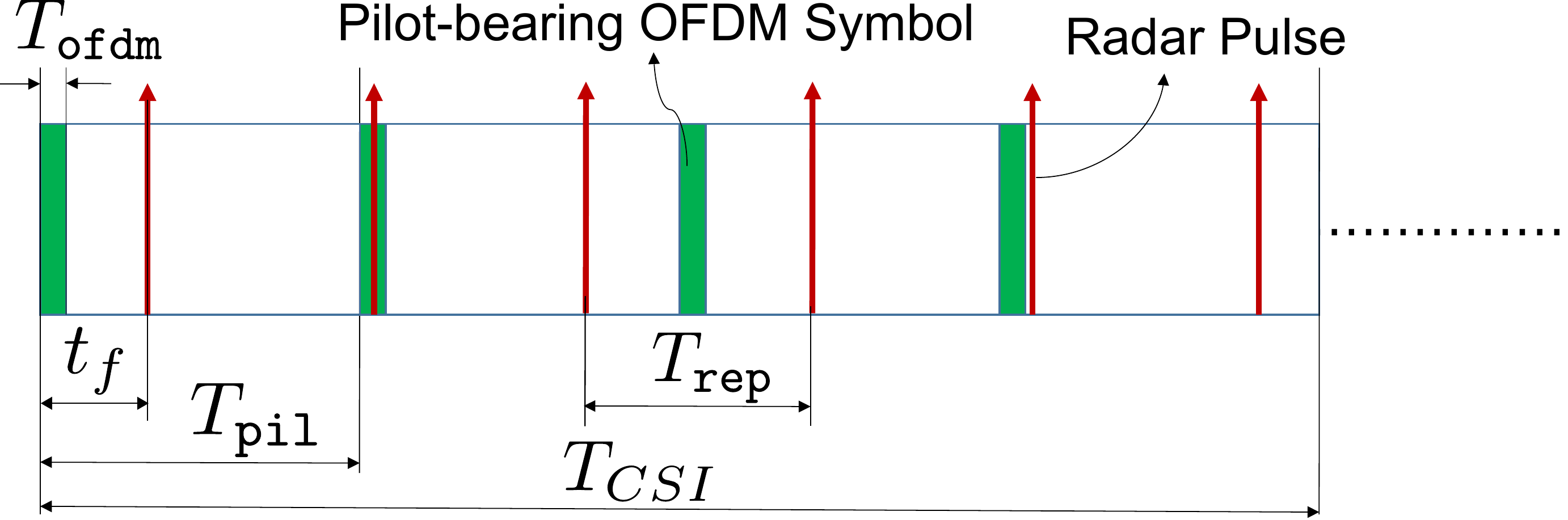}\\
	[-2ex]
	\caption{Illustration of the radar-cellular coexistence scenario. A pulsed radar with repetition interval $T_\mathtt{rep}$ interferes with an OFDM signal with pilots spaced $T_\mathtt{pil}$ seconds apart. Here, the CSI estimation interval ($T_{CSI}$) is comprised of $N_p = 4$ pilot-bearing OFDM symbols.}
	\vspace{-10pt}
	\label{Fig1_Rad_pil}
\end{figure}

\vspace{-10pt}
\section{Probability of Pilot Interference in a Finite CSI Estimation Window}\label{Sec3_PI_radar_pulse}
Let the random variable $M \in \{1,2,\cdots,N_p\}$ denote the number of pilots affected by the pulsed radar signal in the estimation window. In the following analysis, we are interested in the probability that (a) $\{ M \geq 1 \}$, and (b) $\{ M \geq m \}$, for $m=2, 3, \cdots, N_p$. 
\vspace{-15pt}
\subsection{Bounds on Probability of Pilot Interference}
Since $t_f \sim \mathtt{U}[0, T_\mathtt{rep}]$, we have $\mathbb{P}[M \geq 1]$ when $N_r \geq 1$, as shown in the following key result.
\begin{lemma} \label{Lemma_1_UL_bounds}
	If $T_\mathtt{rep} \leq T_{CSI}$, for $m = 2,3,\cdots,N_p$, we have
	\begin{align} 
		\label{eq:P_M_geq 1}
		\frac{T_\mathtt{ofdm}}{T_\mathtt{pil}} & \leq \mathbb{P}[M \geq 1] \leq \min\Big(1, \frac{N_p T_\mathtt{ofdm}}{T_\mathtt{rep}} \Big) \\
		\label{eq:P_M_geq m}
		0 & \leq \mathbb{P}[M \geq m] \leq \min\Big(1, \frac{N_p T_\mathtt{ofdm}}{m T_\mathtt{rep}} \Big).
	\end{align}
\end{lemma}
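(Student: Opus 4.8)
The plan is to recast the problem as elementary geometry of arcs on the circle $\mathbb{R}/T_\mathtt{rep}\mathbb{Z}$, then to dispatch the two upper bounds with the union bound and a first-moment (Markov) argument, and to handle the lower bound $T_\mathtt{ofdm}/T_\mathtt{pil}$ on $\mathbb{P}[M\ge 1]$ with an explicit covering estimate.

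First I would set up the reduction. Index the pilots by $k\in\{0,\dots,N_p-1\}$; as already noted in the system model, at most one pulse of the train $\{t_f+jT_\mathtt{rep}:j\ge 0\}$ can land in the $k$-th pilot interval $[kT_\mathtt{pil},\,kT_\mathtt{pil}+T_\mathtt{ofdm}]$, and a short check (using $t_f\in[0,T_\mathtt{rep})$ and $0\le kT_\mathtt{pil}+T_\mathtt{ofdm}\le T_{CSI}$, so the constraint $j\ge 0$ never binds) shows this happens precisely when $(t_f-kT_\mathtt{pil})\bmod T_\mathtt{rep}\in[0,T_\mathtt{ofdm}]$. Writing $s_k:=kT_\mathtt{pil}\bmod T_\mathtt{rep}$ and letting $A_k\subset\mathbb{R}/T_\mathtt{rep}\mathbb{Z}$ be the arc of length $T_\mathtt{ofdm}$ with initial point $s_k$, the $k$-th pilot is hit iff $t_f\in A_k$; hence $M=\sum_{k=0}^{N_p-1}\mathbf{1}\{t_f\in A_k\}$, and by uniformity of $t_f$ on the circle, $\mathbb{P}[t_f\in A_k]=T_\mathtt{ofdm}/T_\mathtt{rep}$ for every $k$.

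Both upper bounds and the trivial lower bound then fall out. For \eqref{eq:P_M_geq 1}, the union bound gives $\mathbb{P}[M\ge 1]=\mathbb{P}\big[\bigcup_k A_k\big]\le\sum_k\mathbb{P}[A_k]=N_pT_\mathtt{ofdm}/T_\mathtt{rep}$, and $\mathbb{P}[M\ge 1]\le 1$ trivially. For \eqref{eq:P_M_geq m}, linearity gives $\mathbb{E}[M]=N_pT_\mathtt{ofdm}/T_\mathtt{rep}$, so Markov's inequality yields $\mathbb{P}[M\ge m]\le\mathbb{E}[M]/m=N_pT_\mathtt{ofdm}/(mT_\mathtt{rep})$, again capped by $1$; the lower bound $\mathbb{P}[M\ge m]\ge 0$ is vacuous.

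The only substantive step is the lower bound $\mathbb{P}[M\ge 1]\ge T_\mathtt{ofdm}/T_\mathtt{pil}$, i.e. that the $N_p$ arcs $A_0,\dots,A_{N_p-1}$ cover at least a fraction $T_\mathtt{ofdm}/T_\mathtt{pil}$ of the circle. I would argue in two steps. First, every gap between cyclically consecutive points of $\{s_0,\dots,s_{N_p-1}\}$ has length at most $T_\mathtt{pil}$; this is where the hypothesis $T_\mathtt{rep}\le T_{CSI}=N_pT_\mathtt{pil}$ is used, since it forces $\lceil T_\mathtt{rep}/T_\mathtt{pil}\rceil\le N_p$, so the points $0,T_\mathtt{pil},\dots,(\lceil T_\mathtt{rep}/T_\mathtt{pil}\rceil-1)T_\mathtt{pil}$ are already among the $s_k$ and by themselves partition the circle into gaps of length $T_\mathtt{pil}$ plus one last gap of length $T_\mathtt{rep}-(\lceil T_\mathtt{rep}/T_\mathtt{pil}\rceil-1)T_\mathtt{pil}\le T_\mathtt{pil}$, while the remaining $s_k$ only subdivide these. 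Second, since also $T_\mathtt{ofdm}<T_\mathtt{pil}$ (because $N_p<N_\mathtt{ofdm}$ and $N_pT_\mathtt{pil}=N_\mathtt{ofdm}T_\mathtt{ofdm}$), the arc starting at the initial point of a gap of length $g_i$ covers a sub-arc of length $\min(T_\mathtt{ofdm},g_i)$ inside that gap; these sub-arcs lie in distinct gaps, hence are disjoint, so $\big|\bigcup_k A_k\big|\ge\sum_i\min(T_\mathtt{ofdm},g_i)\ge\sum_i\frac{T_\mathtt{ofdm}}{T_\mathtt{pil}}g_i=\frac{T_\mathtt{ofdm}}{T_\mathtt{pil}}T_\mathtt{rep}$, using $\min(T_\mathtt{ofdm},g_i)\ge(T_\mathtt{ofdm}/T_\mathtt{pil})g_i$ for $0\le g_i\le T_\mathtt{pil}$; dividing by $T_\mathtt{rep}$ gives the bound. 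I expect the \textbf{main obstacle} to be the gap estimate of the first step: stating it cleanly (wrap-around of the arcs, possible coincidences among the $s_k$, and the edge case $T_\mathtt{pil}\ge T_\mathtt{rep}$) and justifying it rigorously; a three-distance-theorem style argument on the rotation orbit $\{kT_\mathtt{pil}\bmod T_\mathtt{rep}\}$ is the fallback if the ``first $\lceil T_\mathtt{rep}/T_\mathtt{pil}\rceil$ points already suffice'' shortcut needs more care.
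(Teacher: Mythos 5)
Your proof is correct, and it takes a genuinely different --- and substantially more rigorous --- route than the paper's. The paper's own proof consists of three verbal assertions: the lower bound because $\mathbb{P}[M\geq 1]$ ``cannot be smaller than the fraction of time allocated to pilots,'' the first upper bound because it ``cannot be greater than the ratio'' of total pilot time to $T_\mathtt{rep}$, and the $\mathbb{P}[M\geq m]$ bound via an averaging heuristic over sub-windows of length $T_{CSI}/m$; none of these is formally derived. You instead reduce the problem to $N_p$ arcs of length $T_\mathtt{ofdm}$ on the circle $\mathbb{R}/T_\mathtt{rep}\mathbb{Z}$, after which the first upper bound is the union bound, the second is Markov's inequality applied to $\mathbb{E}[M]=N_pT_\mathtt{ofdm}/T_\mathtt{rep}$ (which also makes transparent why the $m$-bound is exactly the $m=1$ bound divided by $m$), and the lower bound is an honest covering estimate: the gaps between the arc start points $s_k=kT_\mathtt{pil}\bmod T_\mathtt{rep}$ all have length at most $T_\mathtt{pil}$ precisely because $T_\mathtt{rep}\leq N_pT_\mathtt{pil}$, and each arc covers at least the fraction $T_\mathtt{ofdm}/T_\mathtt{pil}$ of its following gap since $T_\mathtt{ofdm}<T_\mathtt{pil}$. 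Your version buys a genuine proof where the paper offers intuition, and it isolates exactly where the hypotheses $T_\mathtt{rep}\leq T_{CSI}$ and $N_p<N_\mathtt{ofdm}$ are used; the paper's version buys brevity and physical interpretability (time-fraction reasoning) at the cost of rigor. The only caveats in your write-up --- the $j\geq 0$ check in the circle reduction, coincident $s_k$, wrap-around of arcs, and the edge case $T_\mathtt{rep}\leq T_\mathtt{pil}$ --- are all handled correctly or are harmless as you note, so the argument goes through as written.
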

\begin{proof}
Since $T_\mathtt{rep} \leq T_{CSI}$, $\mathbb{P}[M \geq 1]$ cannot be smaller than the fraction of time allocated to pilots in the CSI estimation interval. Hence, $\mathbb{P}[M \geq 1] \geq \tfrac{N_p T_\mathtt{ofdm}}{T_{CSI}}$. Substituting $T_{CSI} = N_p T_\mathtt{pil}$ and simplifying, we obtain the lower bound.
Similarly, $\mathbb{P}[M \geq 1]$ cannot be greater than the ratio between the total time allocated to the $N_p$ pilots per estimation window and the radar pulse repetition interval. Therefore, $\mathbb{P}[M \geq 1] \leq \min \big(1, \tfrac{N_p T_\mathtt{ofdm}}{T_\mathtt{rep}} \big)$.

For $m$ pilot signals to be interfered by radar in an estimation window $T_{CSI}$, at least one pilot signal must be affected every $\tfrac{T_{CSI}}{m}$ seconds, since both pilots and radar pulses are equispaced in our model. Hence, using the upper bound in Lemma \ref{Lemma_1_UL_bounds} and noting that there are an average of $\tfrac{N_p}{m}$ pilot signals every $\tfrac{T_{CSI}}{m}$ seconds, we obtain the upper bound. 
\end{proof}
Achievability of the lower bounds are discussed next. Even though the lower bound of $\mathbb{P}[M \geq m]$ is zero, it has important consequences on the limited S-CSI feedback of the \textit{interference channel}, as discussed in section \ref{Sec4_Implications}. 
\begin{remark}
	The presence of strong multipath can result in (a) radar pulse broadening if it is due to local scatterers near the user, or (b) interference by multiple echoes in the presence of far-away specular reflections in the channel. For (a), if the broadened radar pulse width is $T_\mathtt{pulse}$, the same approach as Lemma \ref{Lemma_1_UL_bounds} can be used to obtain bounds on the probability of `partial radar interference' on pilots, by replacing $T_\mathtt{ofdm}$ by $(T_\mathtt{ofdm} + T_\mathtt{pulse})$ in (\ref{eq:P_M_geq 1})-(\ref{eq:P_M_geq m}). For (b), if there is one LoS component and $(p-1)$ specular reflectors in the environment, the upper bound becomes $\mathbb{P}[M \geq m] \leq \min\big(1, \tfrac{pN_p T_\mathtt{ofdm}}{m T_\mathtt{rep}} \big), m=1,2,\cdots,N_p$. The lower bound in both cases remain the same as before.
\end{remark}

In the following subsection, we analyze the probability for important special cases.
\vspace{-10pt}
\subsection{Exact Analysis for Important Special Cases}
Let $\mathbbm{1} (l,t_f)$ denote the event that the $l^{th}$ pilot ($l=1,2,\cdots,N_p$) is interfered by a radar pulse in the estimation window $[0,T_{CSI}]$, when the ToA of the first radar pulse is $t_f$. It can be written as
\begin{equation}
\mathbbm{1} (l, t_f) = \begin{cases}
1 \quad \text{if } \exists j=1,2,\cdots,N_r \text{ such that } (t_f + jT_\mathtt{rep}-lT_\mathtt{pil}) \in [0, T_\mathtt{ofdm}] \\
0\quad \text{otherwise}. 
\end{cases}
\end{equation}
We can write the conditional probability of $\{ M \geq m |t_f \}$ ($m=1,2,\cdots,N_p$) as
\begin{align}
\label{ROP_Indicator}
\mathbb{P}[M \geq m|t_f] = \begin{cases}
1 \quad \text{if } \sum\limits_{l=1}^{N_p} \mathbbm{1} (l, t_f) \geq m, \\
0 \quad \text{otherwise}.
\end{cases}
\end{align}
Using (\ref{ROP_Indicator}), $\mathbb{P}[M \geq 1]$ is obtained by marginalizing $t_f$ using $f_{T_f}(t_f)=\tfrac{1}{T_\mathtt{rep}}, 0\leq t_f \leq T_\mathtt{rep}$ to get
\begin{align}
\label{Prob_atlst_1pil_aff_rad}
\mathbb{P}[M \geq 1] & = \int\limits_{0}^{\min(T_{CSI}, T_\mathtt{rep})} \frac{1}{T_\mathtt{rep}} \mathbb{P}[M \geq 1|t_f] d t_f.
\end{align}
The upper limit of the integral is $\min(T_\mathtt{rep}, T_{CSI})$ and accounts for cases where $T_\mathtt{rep} \geq T_{CSI}$, since the observation window of interest is limited to $[0, T_{CSI}]$. 
\begin{theorem} \label{Theorem_1}
The lower bound $\mathbb{P}[M \geq 1] = \frac{T_\mathtt{ofdm}}{T_\mathtt{pil}}$ is obtained for $T_\mathtt{rep} \leq T_{CSI}$ if $T_\mathtt{rep} = k T_\mathtt{pil}$, where $k \in \{1,2,\cdots,N_p \}$. 
\end{theorem}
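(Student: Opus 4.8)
The plan is to exploit the arithmetic alignment forced by the hypothesis $T_\mathtt{rep} = k T_\mathtt{pil}$ with $k$ an integer: it makes every radar pulse land at a position congruent, modulo $T_\mathtt{pil}$, to $t_f$ itself, and since each pilot window also begins at an integer multiple of $T_\mathtt{pil}$, the event $\{M \ge 1\}$ reduces to a statement about the single scalar $t_f \bmod T_\mathtt{pil}$.

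Concretely, I would write the radar pulse positions lying in $[0, T_{CSI}]$ as $t_f + j T_\mathtt{rep}$ for the relevant nonnegative integers $j$ (note the pulse at $t_f$ itself is always present, since $t_f \le T_\mathtt{rep} \le T_{CSI}$), and decompose each one as $(q + jk)\,T_\mathtt{pil} + s$ with $q = \lfloor t_f / T_\mathtt{pil}\rfloor$ and $s = t_f - q T_\mathtt{pil} \in [0, T_\mathtt{pil})$; since $t_f \sim \mathtt{U}([0,T_\mathtt{rep}])$ and $T_\mathtt{rep} = k T_\mathtt{pil}$, almost surely $q \in \{0, 1, \dots, k-1\}$. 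The $N_p$ pilot windows are $[l T_\mathtt{pil},\, l T_\mathtt{pil} + T_\mathtt{ofdm}]$, $l = 0, 1, \dots, N_p - 1$, and from $T_{CSI} = N_p T_\mathtt{pil} = N_\mathtt{ofdm} T_\mathtt{ofdm}$ with $N_p < N_\mathtt{ofdm}$ one has $T_\mathtt{ofdm} < T_\mathtt{pil}$. A short case analysis on $s$ (treating $s = 0$, $0 < s \le T_\mathtt{ofdm}$, and $T_\mathtt{ofdm} < s < T_\mathtt{pil}$ separately) then shows that the pulse $(q+jk)T_\mathtt{pil} + s$ intersects a pilot window if and only if $s \in [0, T_\mathtt{ofdm}]$ and the integer $n := q + jk$ satisfies $0 \le n \le N_p - 1$, in which case it hits pilot $n$ and no other.

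Next I would observe that the pulse at $t_f$ itself already has lattice index $n = q \le k - 1 \le N_p - 1$, where $k \le N_p$ is exactly the hypothesis (and is anyway forced by $T_\mathtt{rep} \le T_{CSI}$). Hence $\{s \in [0, T_\mathtt{ofdm}]\}$ implies $M \ge 1$; conversely, if $s \notin [0, T_\mathtt{ofdm}]$ the necessary condition above fails for every pulse, so $M = 0$. Thus, outside a set of $t_f$ of Lebesgue measure zero, $\{M \ge 1\}$ coincides with $\{\,t_f \bmod T_\mathtt{pil} \in [0, T_\mathtt{ofdm}]\,\}$.

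To conclude, since $t_f$ is uniform on $[0, k T_\mathtt{pil}]$, the reduced variable $t_f \bmod T_\mathtt{pil}$ is uniform on $[0, T_\mathtt{pil})$ — each of the $k$ subintervals $[m T_\mathtt{pil}, (m+1) T_\mathtt{pil})$ maps uniformly onto $[0, T_\mathtt{pil})$ — so $\mathbb{P}[M \ge 1] = T_\mathtt{ofdm}/T_\mathtt{pil}$; equivalently, one may evaluate~(\ref{Prob_atlst_1pil_aff_rad}) directly, its integrand being the indicator of $\{s \le T_\mathtt{ofdm}\}$ whose preimage in $[0, k T_\mathtt{pil}]$ has measure $k T_\mathtt{ofdm}$. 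This is precisely the lower bound of Lemma~\ref{Lemma_1_UL_bounds}, which proves the theorem. I expect the only delicate point to be the case analysis in the second paragraph — handling the closed-interval endpoints and using $T_\mathtt{ofdm} < T_\mathtt{pil}$ so that each radar pulse can intersect at most one pilot window; everything else is bookkeeping on the integer lattice $\{q + jk\}$.
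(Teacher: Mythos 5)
Your proposal is correct and is essentially the paper's own argument: the paper likewise observes that $T_\mathtt{rep}=kT_\mathtt{pil}$ forces all pulses to be congruent modulo $T_\mathtt{pil}$, so the favorable set of $t_f$ in $[0,T_\mathtt{rep}]$ is $\bigcup_{l=0}^{k-1}[lT_\mathtt{pil},\,lT_\mathtt{pil}+T_\mathtt{ofdm}]$ (its events $\mathcal{E}_0,\mathcal{E}_1$), and then integrates the uniform density to get $kT_\mathtt{ofdm}/(kT_\mathtt{pil})$. Your $t_f \bmod T_\mathtt{pil}$ bookkeeping is just a more explicit rendering of the same reduction.
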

\begin{proof}
If $T_\mathtt{rep}=kT_\mathtt{pil}$ and $k\in \mathbb{N}$, we have the following \textit{mutually exclusive events}:
\begin{enumerate}
\item $\mathcal{E}_0$: If no pilot in $[0, T_\mathtt{rep}]$ is interfered by the radar, then no pilot will ever be interfered. In other words, $\mathbb{P}[M \geq 1|\mathcal{E}_0] = 0$.
\item $\mathcal{E}_1$: If the $l^{th}$ pilot is affected by radar, then the $(l+mk)^{th}$ pilot will be interfered $\forall \ m \in \mathbb{Z}$. Therefore,  $\mathbb{P}[M \geq 1|t_f] = 1$ for $t_f \in [lT_\mathtt{pil} , lT_\mathtt{pil} + T_\mathtt{ofdm}]$ where $l=0,1,\cdots, (k-1)$. 
\end{enumerate}
Applying the total probability theorem in (\ref{Prob_atlst_1pil_aff_rad}), using $T_\mathtt{rep} = kT_\mathtt{pil}$ in the above and simplifying, we obtain the desired result.
\end{proof} 
The exact value of $\mathbb{P}[M \geq 1] \text{ for } T_\mathtt{rep} \geq T_{CSI}$ is provided in the following corollary.
\begin{corollary}\label{Corollary_2}
$\mathbb{P}[M \geq 1] = \tfrac{N_p T_\mathtt{ofdm}}{T_\mathtt{rep}}$ for $T_\mathtt{rep} \geq T_{CSI}$. 
\end{corollary}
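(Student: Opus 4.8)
The plan is to exploit the fact that once $T_\mathtt{rep} \geq T_{CSI}$ the estimation window $[0,T_{CSI}]$ contains \emph{at most one} radar pulse, so that the event $\{M \geq 1\}$ collapses to a purely geometric (length) condition on $t_f$. First I would record that $N_r = \lceil T_{CSI}/T_\mathtt{rep}\rceil = 1$ in this regime, and that among the pulse positions $\{t_f + jT_\mathtt{rep}\}_{j \geq 0}$ only the one at $t_f$ can lie in $[0,T_{CSI}]$, since the next one is at $t_f + T_\mathtt{rep} \geq T_\mathtt{rep} \geq T_{CSI}$ (it can touch the window only at the single instant $t_f=0$ when $T_\mathtt{rep}=T_{CSI}$, a set of $t_f$ of measure zero). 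Consequently $\mathbb{P}[M \geq 1\mid t_f]=1$ precisely when this lone pulse lands inside one of the pilot-bearing OFDM symbols, i.e.\ when $t_f \in \bigcup_{k=0}^{N_p-1}\big[kT_\mathtt{pil},\,kT_\mathtt{pil}+T_\mathtt{ofdm}\big]$, and $\mathbb{P}[M \geq 1\mid t_f]=0$ otherwise.

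Second, I would verify the two structural facts that make the measure computation immediate: (i) the $N_p$ pilot intervals are pairwise disjoint, because consecutive pilots are separated by $T_\mathtt{pil} > T_\mathtt{ofdm}$, the strict inequality following from $N_p < N_\mathtt{ofdm}$ together with $N_p T_\mathtt{pil} = N_\mathtt{ofdm} T_\mathtt{ofdm}$; and (ii) their union lies inside $[0,T_\mathtt{rep}]$, since its right endpoint satisfies $(N_p-1)T_\mathtt{pil}+T_\mathtt{ofdm} < N_p T_\mathtt{pil} = T_{CSI} \leq T_\mathtt{rep}$, so no pilot interval is clipped by the integration limit. With these in hand, substituting the indicator description of $\mathbb{P}[M \geq 1\mid t_f]$ into the marginalization integral (\ref{Prob_atlst_1pil_aff_rad}), whose upper limit is $\min(T_{CSI},T_\mathtt{rep}) = T_{CSI}$ here, yields $\mathbb{P}[M \geq 1] = \tfrac{1}{T_\mathtt{rep}}\sum_{k=0}^{N_p-1}\int_{kT_\mathtt{pil}}^{kT_\mathtt{pil}+T_\mathtt{ofdm}} 1\,dt_f = \tfrac{N_p T_\mathtt{ofdm}}{T_\mathtt{rep}}$.

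The computation itself is routine; the only real care needed is the bookkeeping around the boundary case $T_\mathtt{rep} = T_{CSI}$ (checking that the second pulse contributes nothing) and confirming containment (ii). I would also note, as a byproduct, that this identifies a regime where the upper bound of Lemma \ref{Lemma_1_UL_bounds} is tight: since $T_\mathtt{rep} \geq T_{CSI} = N_\mathtt{ofdm} T_\mathtt{ofdm} > N_p T_\mathtt{ofdm}$, we have $\min\big(1,\tfrac{N_p T_\mathtt{ofdm}}{T_\mathtt{rep}}\big) = \tfrac{N_p T_\mathtt{ofdm}}{T_\mathtt{rep}}$, so the corollary attains the right-hand side of (\ref{eq:P_M_geq 1}).
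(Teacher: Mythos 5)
Your proposal is correct and follows the same route as the paper, which proves the corollary by direct substitution of (\ref{ROP_Indicator}) into (\ref{Prob_atlst_1pil_aff_rad}); you simply make explicit the details the paper leaves implicit (at most one pulse in the window, disjointness of the pilot intervals since $T_\mathtt{pil} > T_\mathtt{ofdm}$, and containment of their union within the integration range). The added bookkeeping around the boundary case $T_\mathtt{rep}=T_{CSI}$ and the observation that this regime attains the upper bound of Lemma \ref{Lemma_1_UL_bounds} are correct and harmless.
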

\begin{proof}
The proof is similar to Theorem \ref{Theorem_1}, obtained by direct substitution of (\ref{ROP_Indicator}) in (\ref{Prob_atlst_1pil_aff_rad}).
\end{proof}
\begin{figure}[t]
	\centering
	\includegraphics[width=5.0in]{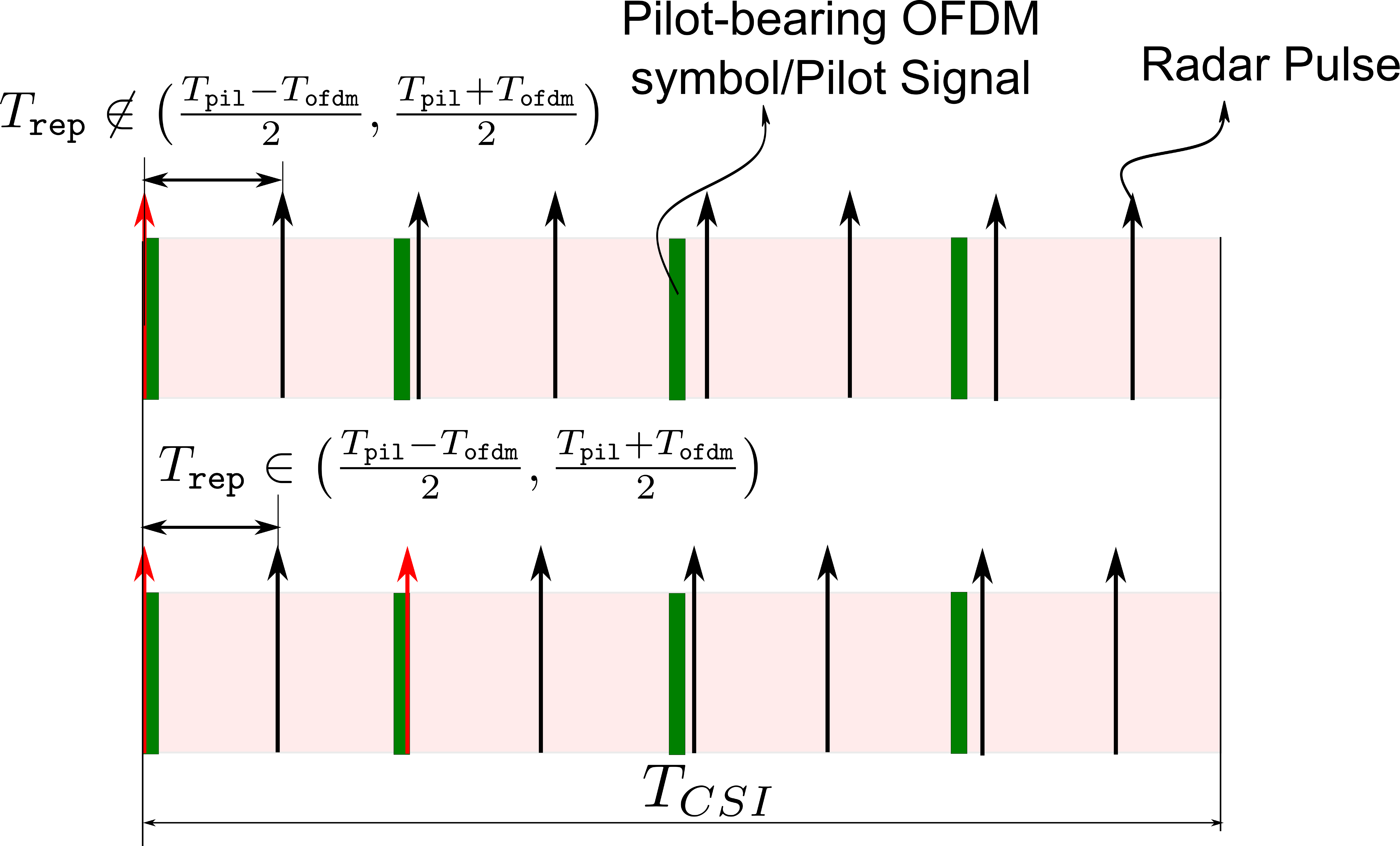}\\
	[-2ex]
	\caption{Illustration of Theorem \ref{Theorem_2_range_of_Trep}, for $N_p = 4, k=1$ and $q=2$ when $m=2$ pilots are interfered by radar pulses (indicated in red).}
	\vspace{-10pt}
	\label{Fig_Proof_Lemma_5}
\end{figure}

Finally, we derive the set of $T_\mathtt{rep}$ for which $\mathbb{P}[M \geq m] > 0$.
\begin{theorem} \label{Theorem_2_range_of_Trep}
For $N_p >1$ and $m=2,\cdots,N_p$, 
\begin{align}
	\label{P_M_geq_m_iff_proof}
	\mathbb{P}[M \geq m] & = \begin{cases}
	\text{non-zero} & \text{if } T_\mathtt{rep} \in \mathcal{T}_{m, N_p} \\
	0 & \text{if } T_\mathtt{rep} \notin \mathcal{T}_{m, N_p} \cap (T_\mathtt{ofdm}, \infty),
	\end{cases} \nonumber \\
	\text{where } \mathcal{T}_{m, N_p} & = \bigcup_{k \in \mathcal{K}, q \in \mathbb{N}} \big(\tfrac{(m-1)kT_\mathtt{pil} - T_\mathtt{ofdm}}{(m-1)q}, \tfrac{(m-1)kT_\mathtt{pil} +T_\mathtt{ofdm}}{(m-1)q} \big), \text{and } \mathcal{K} & = \big \{1,2 ,\cdots, \big\lceil \tfrac{N_p - 1}{m-1} \big\rceil \big \}.
\end{align}

\end{theorem}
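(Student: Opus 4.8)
The plan is to convert the claim into a statement about overlapping intervals of the random offset $t_f$, prove the ``non-zero'' direction by an explicit periodic construction, and prove the ``zero'' direction as the contrapositive of a pigeonhole estimate. \emph{Reformulation.} For a fixed $t_f$, the $l$-th pilot is interfered exactly when $t_f$ lies in the closed window $W_l(j):=\big[lT_\mathtt{pil}-jT_\mathtt{rep},\ lT_\mathtt{pil}-jT_\mathtt{rep}+T_\mathtt{ofdm}\big]$ for some admissible pulse index $j$, and every such window has length $T_\mathtt{ofdm}$. Hence $\mathbb{P}[M\ge m]>0$ iff one can pick distinct pilots $l_1<\cdots<l_m$ (so $l_m-l_1\le N_p-1$) and indices $j_1<\cdots<j_m$ for which $W_{l_1}(j_1),\dots,W_{l_m}(j_m)$ share an interval of positive length; since the windows are equally long, this is equivalent to the $m$ numbers $c_i:=j_iT_\mathtt{rep}-l_iT_\mathtt{pil}$ all lying in one interval of length strictly less than $T_\mathtt{ofdm}$ (equivalently, $m$ of the points $lT_\mathtt{pil}\bmod T_\mathtt{rep}$ lying in an arc shorter than $T_\mathtt{ofdm}$ on the circle $\mathbb{R}/T_\mathtt{rep}\mathbb{Z}$). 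I would first record this reformulation, together with the elementary facts that $j_i$ and $l_i$ increase together and that $T_\mathtt{rep}>T_\mathtt{ofdm}$ forces the pulse-index differences appearing below to be $\ge 1$; the constraints that pulses stay in $[0,T_{CSI}]$ and $t_f\in[0,T_\mathtt{rep}]$ are then handled by periodicity.

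\emph{``Non-zero'' direction.} Suppose $T_\mathtt{rep}\in\mathcal{T}_{m,N_p}$, so there are $k\in\mathcal{K}$ and $q\in\mathbb{N}$ with $|qT_\mathtt{rep}-kT_\mathtt{pil}|<\tfrac{T_\mathtt{ofdm}}{m-1}$. Take pilots in arithmetic progression at positions $0,k,2k,\dots,(m-1)k$ --- these fit among the $N_p$ pilots precisely when $(m-1)k\le N_p-1$, which is why $\mathcal{K}$ is truncated near $\tfrac{N_p-1}{m-1}$ --- and let each be interfered by pulses that are $q$ repetition intervals apart. Then every consecutive difference $c_{i+1}-c_i$ equals $kT_\mathtt{pil}-qT_\mathtt{rep}$, of magnitude $<\tfrac{T_\mathtt{ofdm}}{m-1}$, so the $c_i$ span at most $(m-1)\cdot\tfrac{T_\mathtt{ofdm}}{m-1}=T_\mathtt{ofdm}$ and, by the reformulation, a positive-measure set of offsets $t_f$ interferes all $m$ pilots --- exactly the configuration drawn in Fig.~\ref{Fig_Proof_Lemma_5}. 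The borderline $k=\big\lceil\tfrac{N_p-1}{m-1}\big\rceil$ with $m-1\nmid N_p-1$ needs a small modification of the progression, which I would treat separately.

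\emph{``Zero'' direction.} I would prove the contrapositive: if $\mathbb{P}[M\ge m]>0$ and $T_\mathtt{rep}>T_\mathtt{ofdm}$, then $T_\mathtt{rep}\in\mathcal{T}_{m,N_p}$. Fix $m$ interfered pilots $l_1<\cdots<l_m$ with associated $c_1,\dots,c_m$ in an interval of length $<T_\mathtt{ofdm}$. Two facts are available: since $l_m-l_1\le N_p-1$, some consecutive gap $\lambda_i=l_{i+1}-l_i$ is at most $\big\lceil\tfrac{N_p-1}{m-1}\big\rceil$; and partitioning the $c_i$-window into $m-1$ equal pieces shows some pair satisfies $|c_b-c_a|<\tfrac{T_\mathtt{ofdm}}{m-1}$. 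The core of the proof is to merge these into a \emph{single} pair $l_a<l_b$ that is at once index-close, $l_b-l_a\le\big\lceil\tfrac{N_p-1}{m-1}\big\rceil$, and phase-close, $|c_b-c_a|<\tfrac{T_\mathtt{ofdm}}{m-1}$; then with $k:=l_b-l_a\in\mathcal{K}$ and $q:=j_b-j_a\ge1$ one gets $|qT_\mathtt{rep}-kT_\mathtt{pil}|=|c_b-c_a|<\tfrac{T_\mathtt{ofdm}}{m-1}$, which rearranges to $T_\mathtt{rep}\in\big(\tfrac{(m-1)kT_\mathtt{pil}-T_\mathtt{ofdm}}{(m-1)q},\tfrac{(m-1)kT_\mathtt{pil}+T_\mathtt{ofdm}}{(m-1)q}\big)\subseteq\mathcal{T}_{m,N_p}$.

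\emph{Main obstacle.} The hard part is exactly that last combination step: a crude two-dimensional pigeonhole would sacrifice the factor $\tfrac{1}{m-1}$ in the phase estimate. The argument must instead exploit the finer structure --- that the $l_i$ are strictly increasing with total span $\le N_p-1$ while all partial sums of the increments $c_{i+1}-c_i$ stay trapped in a window of width $<T_\mathtt{ofdm}$ --- via a greedy/telescoping selection of a contiguous block of the chosen pilots whose index-weight is $\le\big\lceil\tfrac{N_p-1}{m-1}\big\rceil$ and whose net phase drift is $<\tfrac{T_\mathtt{ofdm}}{m-1}$. The related smaller subtlety is the endpoint $k=\lceil(N_p-1)/(m-1)\rceil$ in the ``non-zero'' direction, where the tight arithmetic progression does not quite fit and must be adjusted.
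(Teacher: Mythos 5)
Your reformulation on the circle $\mathbb{R}/T_\mathtt{rep}\mathbb{Z}$ and your ``non-zero'' construction coincide with the paper's own argument: the arithmetic progression of pilots $0,k,2k,\dots,(m-1)k$ hit by pulses $q$ repetition intervals apart is exactly the paper's $\mathcal{T}^{(1)}_{m,N_p}$ scaled down by $q$, and the endpoint case $k=\lceil(N_p-1)/(m-1)\rceil$ with $(m-1)\nmid(N_p-1)$, which you flag but do not resolve, is silently skipped by the paper as well (and is in fact problematic when $T_\mathtt{pil}$ is large relative to $T_\mathtt{ofdm}$, since $m$ pulses then cannot fit in the window). The substantive issue is the ``zero'' direction. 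The obstacle you isolate --- extracting a single pair of interfered pilots that is simultaneously index-close, $l_b-l_a\le\lceil(N_p-1)/(m-1)\rceil$, and phase-close, $|c_b-c_a|<T_\mathtt{ofdm}/(m-1)$ --- is not merely hard; such a pair need not exist, because the increments $c_{i+1}-c_i$ can have opposite signs and partially cancel, keeping all $c_i$ in a window of width $<T_\mathtt{ofdm}$ while every admissible pair violates one of the two conditions. Concretely, take $T_\mathtt{ofdm}=1$, $T_\mathtt{pil}=1.5$, $N_p=4$ (so $T_{CSI}=6$ and the pilots occupy $[0,1],[1.5,2.5],[3,4],[4.5,5.5]$), $m=3$, and $T_\mathtt{rep}=2.1>T_\mathtt{ofdm}$. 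For every $t_f\in[0.3,0.4]$ the pulses at $t_f$, $t_f+2.1$, $t_f+4.2$ land inside pilots $0$, $1$, $3$ respectively, so $\mathbb{P}[M\ge3]\ge 0.1/2.1>0$; yet $\mathcal{K}=\{1,2\}$ and $\mathcal{T}_{3,4}=\bigcup_{q\in\mathbb{N}}\big[(1/q,2/q)\cup(2.5/q,3.5/q)\big]$ does not contain $2.1$. Here the pilot gaps are $(1,2)$ with phase drifts $+0.6$ and $-0.9$, each exceeding $T_\mathtt{ofdm}/2$ in magnitude, while the end-to-end pair has $|2T_\mathtt{rep}-3T_\mathtt{pil}|=0.3<T_\mathtt{ofdm}/2$ but index difference $k=3\notin\mathcal{K}$.

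So the ``zero'' half of the theorem is false as stated, and your refusal to claim the merge step was the right instinct: the paper's entire proof of that half is the sentence ``since all the feasible $T_\mathtt{rep}$ values \dots are contained in $\mathcal{T}_{m,N_p}$,'' i.e., an assertion of precisely the claim to be proved. Your framework does yield a correct, weaker converse almost for free: if $\mathbb{P}[M\ge m]>0$ then $m$ points $c_i=j_iT_\mathtt{rep}-l_iT_\mathtt{pil}$ lie in an interval of length $<T_\mathtt{ofdm}$, so by pigeonhole over $m-1$ equal sub-intervals some pair satisfies $|qT_\mathtt{rep}-kT_\mathtt{pil}|<T_\mathtt{ofdm}/(m-1)$ with $q\ge1$ and $1\le k\le N_p-1$; this proves the necessity statement with $\mathcal{K}$ enlarged to $\{1,\dots,N_p-1\}$, which is the strongest clean converse your (and the paper's) interval structure supports. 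Matching it with the sufficiency direction, which genuinely fails for the larger $k$, would require either characterizing the achievable non-arithmetic hit patterns or restricting to the sparse-pilot regime $T_\mathtt{pil}\gg T_\mathtt{ofdm}$ that the paper implicitly has in mind.
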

\begin{proof}
Due to equispaced pilots, it can be deduced using Theorem \ref{Theorem_1} (specifically, event $\mathcal{E}_1$) that multiple pilots are interfered when $T_\mathtt{rep}$ is \textit{in some neighborhood} of $kT_\mathtt{pil}$, where  $k\in \mathbb{N}$. To interfere with \textit{at least} $m$ pilots in the CSI estimation window, one of the neighborhoods can be shown to be $\big(\tfrac{-T_\mathtt{ofdm}}{m-1}, \tfrac{T_\mathtt{ofdm}}{m-1} \big)$ using the following construction: Conditioned on the event that the first pilot is interfered, there exists some $t_f \in [0, T_\mathtt{ofdm}]$ for which the subsequent $(m-1)$ radar pulses interfere with a pilot if $T_\mathtt{rep} \in \big(kT_\mathtt{pil} - \tfrac{T_\mathtt{ofdm}}{m-1}, kT_\mathtt{pil} + \tfrac{T_\mathtt{ofdm}}{m-1} \big)$. The lower and upper limits of this interval correspond to $t_f = T_\mathtt{ofdm}$ and $t_f = 0$ respectively.
In addition, $k$ must satisfy $k \leq \big \lceil \tfrac{N_p-1}{m-1} \big \rceil$ to ensure that at least $m$ radar pulses are present in $[0, (N_p - 1)T_\mathtt{pil} + T_\mathtt{ofdm}]$ for $\{\mathbb{P}[M \geq m] > 0 \}$ to be true. Since $\big(kT_\mathtt{pil} - \tfrac{T_\mathtt{ofdm}}{m}, kT_\mathtt{pil} + \tfrac{T_\mathtt{ofdm}}{m} \big) \subset \big(kT_\mathtt{pil} - \tfrac{T_\mathtt{ofdm}}{m-1}, kT_\mathtt{pil} + \tfrac{T_\mathtt{ofdm}}{m-1} \big)$ for all $m > 1$, \textit{at least} $m$ pilots are interfered by the radar in the estimation window if $T_\mathtt{rep} \in \mathcal{T}^{(\mathtt{1})}_{m, N_p} = \bigcup_{k \in \mathcal{K}} \big(kT_\mathtt{pil} - \tfrac{T_\mathtt{ofdm}}{m-1}, kT_\mathtt{pil} + \tfrac{T_\mathtt{ofdm}}{m-1} \big)$. In addition, $T_\mathtt{rep} \in \mathcal{T}^{(q)}_{m, N_p} = \{ \tfrac{T}{q} \big| T \in \mathcal{T}^{(1)}_m, q \in \mathbb{N} \}$ can also result in non-zero $\mathbb{P}[M \geq m]$, since $T_\mathtt{rep} \in \mathcal{T}^{(1)}_{m, N_p}$ scaled down by an integer factor preserves the time offset relationship between the radar pulse train and the pilots, as shown in Fig. \ref{Fig_Proof_Lemma_5}. Therefore, $\mathbb{P} [M \geq m]$ \textit{is non-zero} if $T_\mathtt{rep} \in \bigcup_{q \in \mathbb{N}} \mathcal{T}^{(q)}_{m, N_p} = \mathcal{T}_{m, N_p}$. 

Furthermore, if $\mathbbm{1}(1,t_f) = \cdots = \mathbbm{1}(j,t_f) = 0$ and $\mathbbm{1}(j+1, t_f) = 1$ for $j=1, \cdots, (N_p - 1)$, it can be seen that $\mathbb{P} [M \geq m] > 0$ if $T_\mathtt{rep} \in \mathcal{T}_{m, N_p - j} \subset \mathcal{T}_{m, N_p}$, using a similar construction. 

Finally, we notice that $\mathbb{P}[M \geq m] = 1\ \forall\ m=1,2,\cdots, N_p$ if $T_\mathtt{rep} \in [0, T_\mathtt{ofdm}]$, since every OFDM symbol will be interfered in this case. Since \textit{all the feasible} $T_\mathtt{rep}$ values which ensure that $P[M \geq m]$ is non-zero are contained in $\mathcal{T}_{m, N_p}$, we have $\mathbb{P}[M \geq m] = 0$ if $T_\mathtt{rep} \notin \mathcal{T}_{m, N_p} \cap (T_\mathtt{ofdm}, \infty)$. 
\end{proof}
Before we discuss the implications of these results on I-CSI estimation and S-CSI feedback, we validate their accuracy using numerical results.
\begin{figure}[t]
	\centering
	\begin{subfigure}[t]{0.48\textwidth}
		\centering
		\label{Fig3a}
		\includegraphics[width=3.4in]{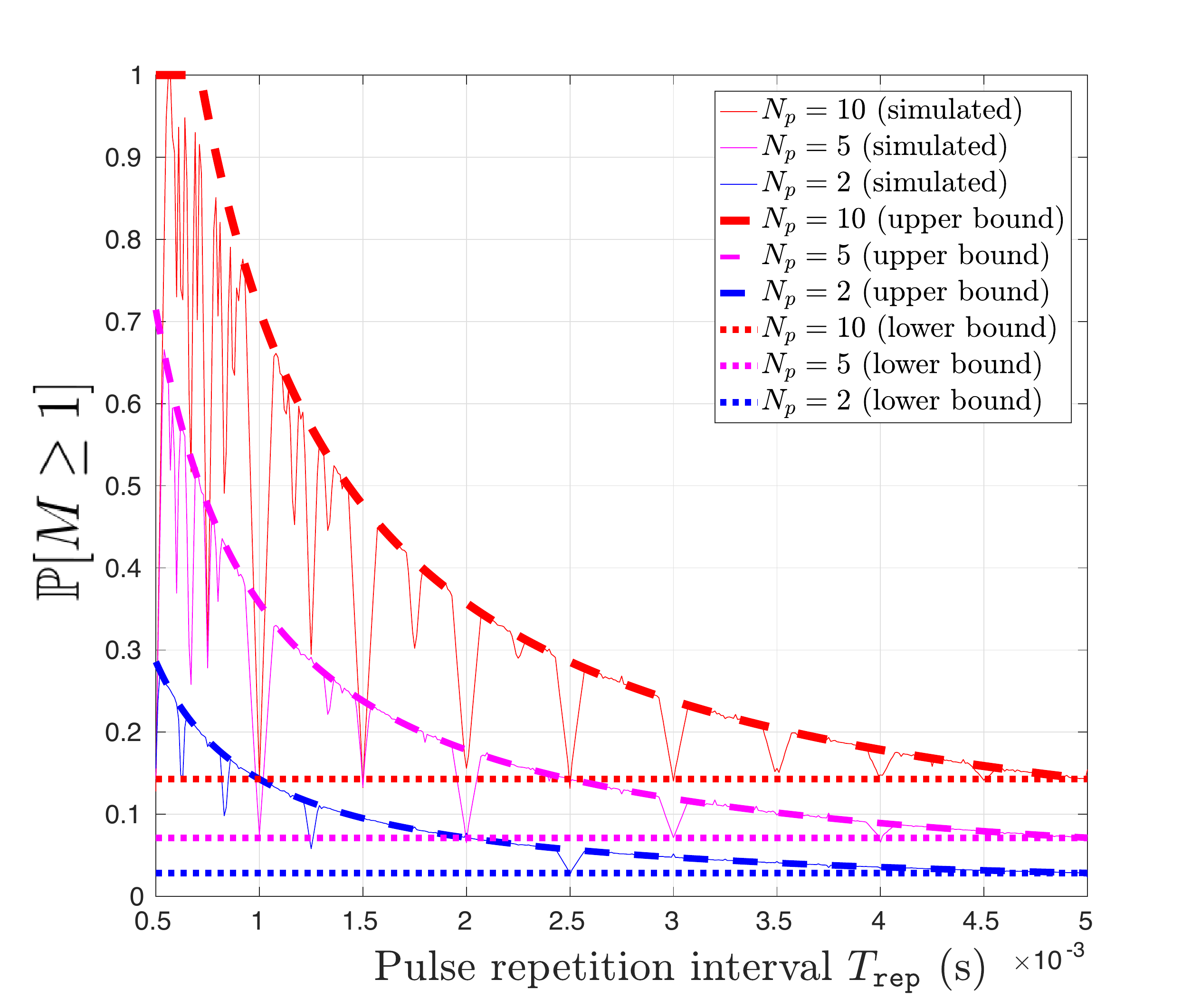}\\
		[-1ex]
		\caption{}
	\end{subfigure}
	~ 
	\begin{subfigure}[t]{0.48\textwidth}
		\centering
		\label{Fig3b}
		\includegraphics[width=3.4in]{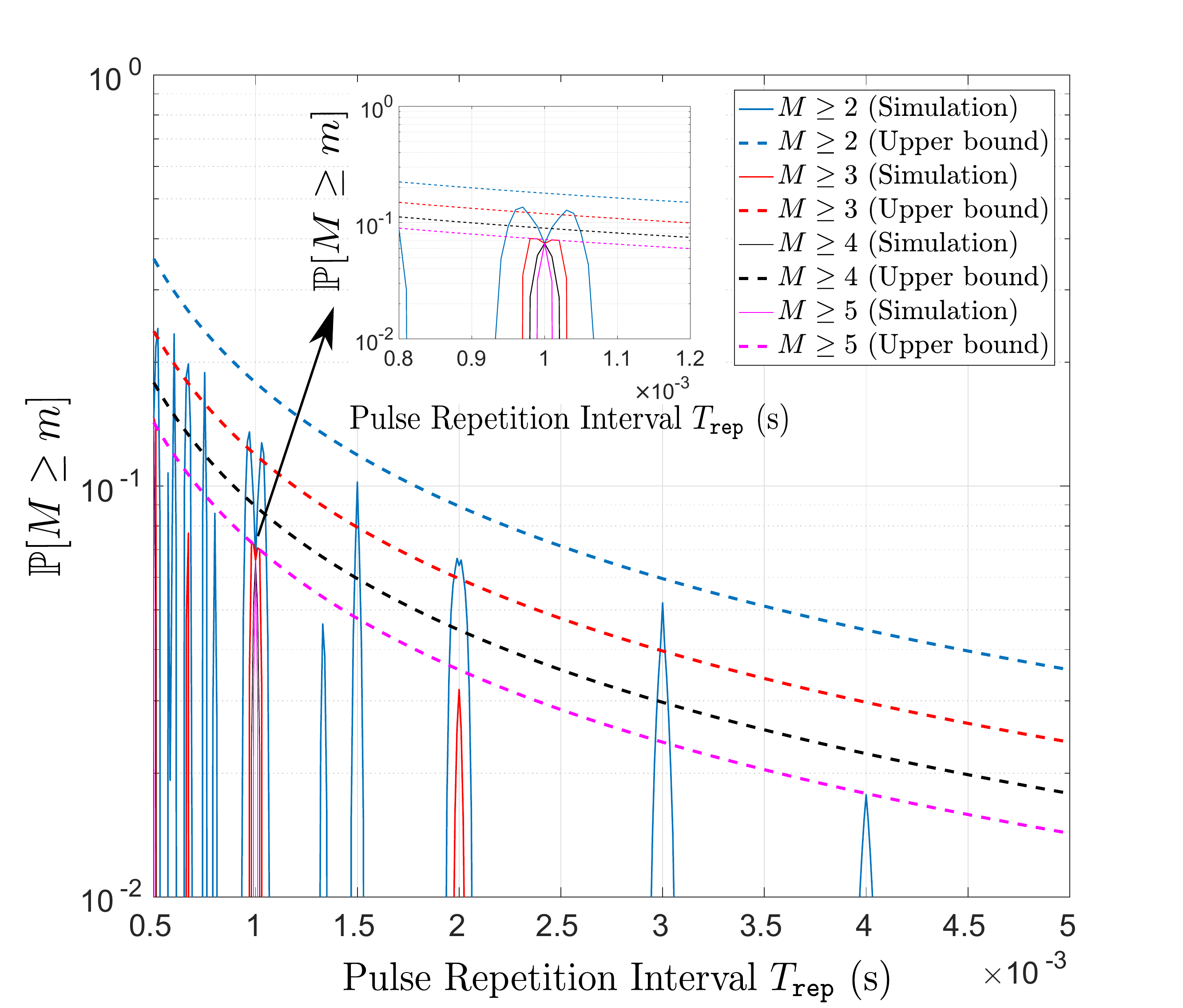}\\
		[-1ex]
		\caption{}
	\end{subfigure}
	\caption{(a) $\mathbb{P}[M \geq 1]$, and its upper and lower bounds as a function of $T_\mathtt{rep}$ and $N_p$ for $T_\mathtt{ofdm} = 71.43\ \mu$s and $T_\mathtt{CSI} = 5$ ms. (b) $\mathbb{P}[M \geq m]$ for $m>1$, and its upper bound as a function of $m$, for $N_p = 5$. $\mathbb{P}[M \geq m] > 0$ only when $T_\mathtt{rep}$ lies in a small neighborhood of a rational fraction of $T_\mathtt{pil}$.}
	\vspace{-10pt}
	\label{Fig3_SimResults}
\end{figure}
\vspace{-10pt}
\subsection{Numerical Results}
We consider a cellular system with a typical OFDM symbol duration of $T_\mathtt{ofdm}=71.43\ \mu s$, and $N_p \in \mathbb{N}$ periodically spaced pilot-bearing OFDM symbols per estimation window of length $T_{CSI}=5$ ms. Fig. \ref{Fig3_SimResults}(a) shows the values of $\mathbb{P}[M\geq 1]$, along with the corresponding upper and lower bounds for different values of $T_\mathtt{rep}$ and $N_p$. We observe that the upper and lower bounds derived in Lemma \ref{Lemma_1_UL_bounds} are in agreement with the numerical results. Furthermore, we also observe that the lower bound is achieved for $T_\mathtt{rep}=kT_\mathtt{pil}, k \in \mathbb{N}$, as proven in Theorem \ref{Theorem_1}.

Fig. \ref{Fig3_SimResults}(b) shows the variation of $\mathbb{P}[M \geq m]$, for $m=1,2,\cdots,5$ in an estimation window of length $T_{CSI}=5$ ms. We observe that the upper bound in (\ref{eq:P_M_geq m}) is in agreement with the numerical results. More importantly, it can be seen that $\mathbb{P}[M\geq m] > 0$ iff $T_\mathtt{rep} \in \mathcal{T}_{m,5}$, as proven in Theorem \ref{Theorem_2_range_of_Trep}. 
\vspace{-7pt}
\section{Fundamental Insights on Coherent Demodulation and Limited S-CSI Feedback}\label{Sec4_Implications}
In this section, we derive new insights on pilot-aided demodulation and limited S-CSI feedback.
\vspace{-10pt}
\subsection{Minimizing Impact on Coherent Demodulation}
It is well known that corrupted I-CSI is detrimental to coherent demodulation \cite{Karlsson_TDD_Massive_MIMO_Jam_2017}. Therefore, minimizing $\mathbb{P}[M \geq 1]$ for DMRS over an infinite observation interval ($T_{CSI} \rightarrow \infty$) \textit{on average} minimizes the occurrence of pulsed radar-induced I-CSI contamination. Using Theorem \ref{Theorem_1}, the lower bound of $\mathbb{P}[M \geq 1]$ is achieved if $T_\mathtt{rep} = kT_\mathtt{DMRS}$ for finite $k \in \mathbb{N}$. Therefore, DMRS interference can be minimized as follows.

\subsubsection{Partial Radar-Cellular Cooperation}
If partial radar-cellular cooperation is feasible, the radar can adapt $T_\mathtt{rep}$ based on (a) prior knowledge, or (b) explicit feedback of $T_\mathtt{DMRS}$.

\subsubsection{Absence of Radar-Cellular Cooperation}
In fading channels with slowly varying channel statistics, throughput can be enhanced by adapting the pilot spacing in time and frequency in real-time, as a function of the channel conditions \cite{Raghu_TVT_2017}. In addition, we propose minimizing $\mathbb{P}[M \geq 1 | T_\mathtt{rep}]$ for DMRS over an infinite observation interval ($T_{CSI} \rightarrow \infty$) \textit{on average} to mitigate I-CSI contamination. Mathematically, the optimal DMRS spacing ($T_\mathtt{DMRS,opt}$) is obtained using
\begin{align} 
\label{Optimiz_problem}
T_\mathtt{DMRS,opt} & = \underset{T_\mathtt{DMRS} \in \mathbb{R}^+}{\text{arg min}}\ \mathbb{P}[M \geq 1|T_\mathtt{rep}], \nonumber \\
\text{s.t. } & T_\mathtt{DMRS} \leq T_\mathtt{coh}.
\end{align}
The constraint is introduced to ensure accurate channel estimation, whereby the DMRS spacing should be smaller than the coherence time $(T_\mathtt{coh})$ \cite{Hassibi_Tcoh_TIT_2002}. In general, an exact solution cannot be obtained due to the aforementioned constraint. 

Nevertheless, a heuristic solution can be obtained using Theorem \ref{Theorem_1} by observing that local minima occur at $T_\mathtt{DMRS}=T_\mathtt{rep}/k,k\in \mathbb{N}$, where $\mathbb{P}[M \geq 1|T_\mathtt{rep}] = \tfrac{kT_\mathtt{ofdm}}{T_\mathtt{rep}}$. The best case scenario occurs when $k=1$, and $T_\mathtt{DMRS,opt} = T_\mathtt{rep}$.  In order to satisfy the constraint, the pilot spacing can be chosen as $T_\mathtt{DMRS} = \tfrac{T_\mathtt{rep}}{k_\mathtt{opt}}$, where $k_\mathtt{opt} = \big\lceil \tfrac{T_\mathtt{rep}}{T_\mathtt{coh}} \big\rceil$. 
To perform this adaptation in real-time, $\hat{T}_\mathtt{rep}$ should be estimated, especially in the case of military radar systems where $T_\mathtt{rep}$ is often unknown. 
\vspace{-10pt}
\subsection{Impact on Limited S-CSI Feedback of the Interference Channel}
Pilot-aided S-CSI estimates of the \textit{interference channel} is inaccurate if pilots are impaired with low probability, or not impacted at all \cite{Safavi_Roy_ICC_SINR_Nrwbnd_rad_2015}, \cite{rao2019analysis}. Under our system model, (a) $\mathbb{P}[M \geq 1]$ is equivalent to the probability that S-CSI of the \textit{interference channel} is accurately acquired using $Q_\mathtt{min}(\mathbf{R})$, and (b) $\mathbb{P}[M \geq m]$ denotes the probability S-CSI of the \textit{interference channel} is accurately acquired using $Q_\mathtt{avg}(\mathbf{R})$.

In contrast to I-CSI, limited feedback of $Q_\mathtt{min}(\mathbf{R})$ is inaccurate for the interference channel when $T_\mathtt{rep} = k T_\mathtt{CSIRS}, k \in \mathbb{N}$.  
Furthermore, (a) upper bound of the probability of obtaining $m$ accurate S-CSI estimates of the \textit{interference channel state} decreases with $m$, (Lemma \ref{Lemma_1_UL_bounds}), and (b) this probability \textit{is non-zero} only when $T_\mathtt{rep} \in \mathcal{T}_{m, N_p}$ (Theorem \ref{Theorem_2_range_of_Trep}). Both of these results imply that window-averaged S-CSI $(Q_\mathtt{avg}(\mathbf{R}) )$ is not reliable for S-CSI acquisition of the interference channel state, since $\mathbb{P}[M \geq N_p/2]=0$ for a large range of $T_\mathtt{rep}$ values. As a result, link adaptation and scheduling schemes in 5G NR will be inefficient in the presence of high-power radar pulses with low $T_\mathtt{rep}$, since pilot-aided schemes fail to capture S-CSI of the \textit{interference channel}. Therefore, \textit{blind S-CSI estimation methods} need to be used to augment pilot-aided estimates, when sharing spectrum with pulsed radars.

\begin{figure}[t]
	\centering
	\includegraphics[width=4.0in]{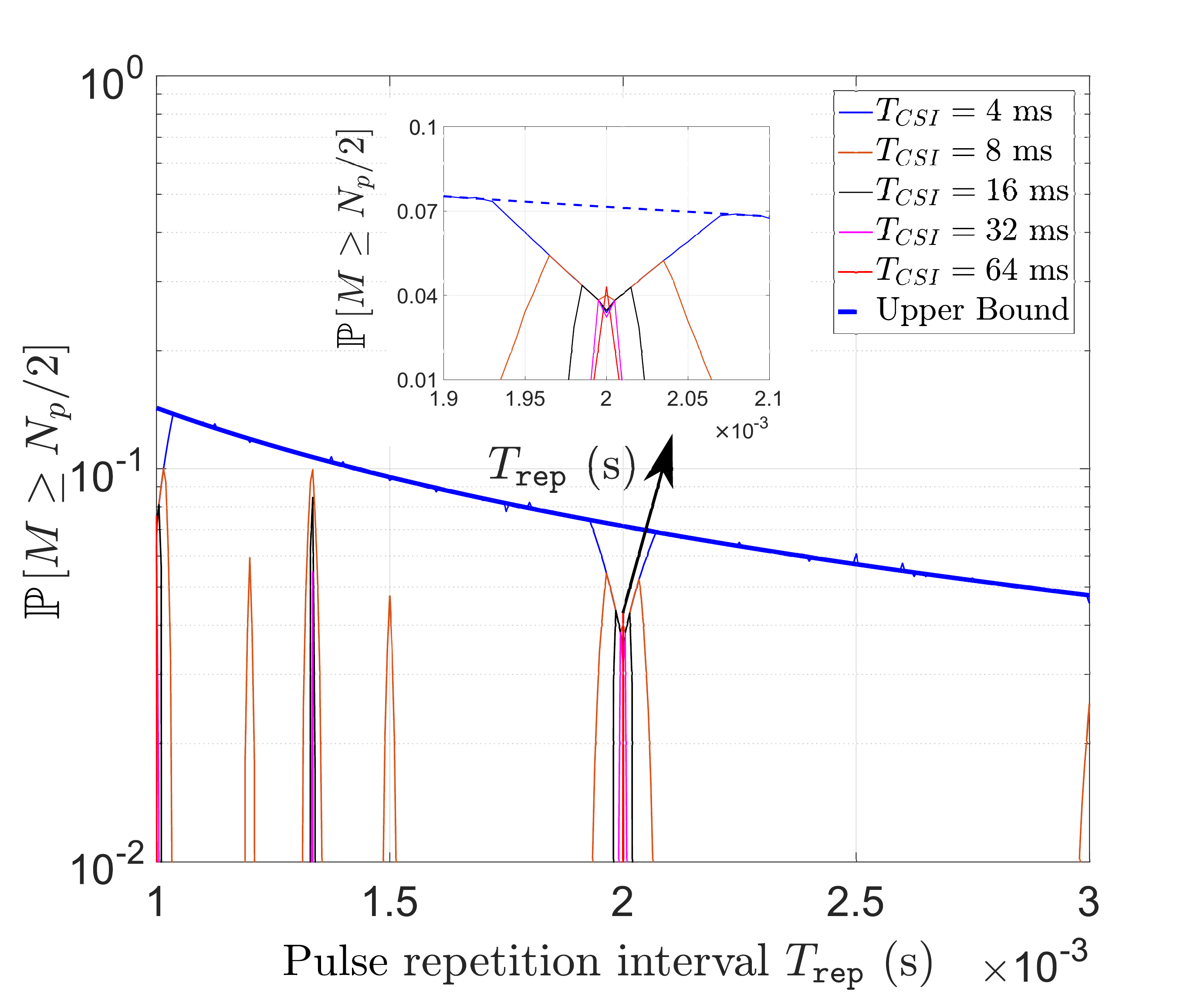}\\
	[-2ex]
	\caption{Plot of $\mathbb{P} \big[M \geq \tfrac{N_p}{2} \big]$ and its upper bound as a function of $T_\mathtt{rep}$, for fixed pilot spacing $T_\mathtt{CSIRS}=2$ ms and estimation window lengths $T_{CSI}=4,8,16,\cdots,64$ ms.}
	\vspace{-10pt}
	\label{Fig4_Impact_of_TCSI}
\end{figure}
\vspace{-5pt}
\subsection{Numerical Results}
Fig. \ref{Fig4_Impact_of_TCSI} shows the impact of the estimation window length on the mean/median-based limited S-CSI feedback schemes in 5G NR, for $T_\mathtt{ofdm} = 71.43\ \mu$s, $T_{CSI} = 4,8,\cdots, 64$ ms \cite{3GPPRel15_138_214}, and $2 \text{ ms} \leq T_\mathtt{rep} \leq 3$ ms. We observe that for a fixed pilot spacing of $T_\mathtt{CSIRS} = 2$ ms, the upper bound of $\mathbb{P}[M \geq \tfrac{N_p}{2}]$ is the same for all cases. However, we also observe that increasing $T_{CSI}$ shrinks the set of $T_\mathtt{rep}$ values for which mean/median S-CSI will be accurate for the \textit{interference channel}. This behavior can be explained using Theorem \ref{Theorem_2_range_of_Trep}: Since $\mathcal{T}_{N_p/2, N_p} = \bigcup_{k \in \{1,2\}, q \in \mathbb{N}} \big(kT_\mathtt{CSIRS} - \tfrac{T_\mathtt{ofdm}}{(N_p/2-1)q}, kT_\mathtt{CSIRS} + \tfrac{T_\mathtt{ofdm}}{(N_p/2-1)q} \big)$, increasing $T_{CSI}$ while keeping $T_\mathtt{CSRIS}$ constant increases $N_p$, thus contracting the size of $\mathcal{T}_{N_p/2, N_p}$. Therefore, increasing the estimation window length while keeping the pilot spacing fixed degrades the availability of accurate S-CSI estimates for the \textit{interference channel state}, when mean or median S-CSI feedback is used. In particular, sparsity of CSI-RS in the time domain \cite{3GPPRel15_138_214} reduces the effectiveness of pilot-aided S-CSI estimation and limited feedback schemes in pulsed radar-NR spectrum sharing scenarios.
\vspace{-5pt}
\section{Conclusion}
Considering an underlay pulsed radar-cellular spectrum sharing scenario, we derived bounds on the probability of single and multiple pilot-bearing OFDM symbols being interfered in a finite estimation window. We proved achievability of the lower bound, and provided insights on designing \textit{pilot interference-minimizing schemes} as a function of the pilot spacing and the radar repetition interval. We also proved that the probability of multiple cellular pilots being  interfered by radar pulses in the estimation window is zero for a large set of radar repetition intervals. This is detrimental for pilot-aided statistical CSI estimation in the \textit{interference channel}, which highlights the need for blind methods in NR and beyond-5G systems sharing spectrum with radars. We demonstrated the accuracy of the derived expressions, and usefulness of the design principles using examples from 5G NR. As cellular networks evolve beyond 5G, these results and insights will be crucial for demodulation reference signal design and robust S-CSI acquisition and feedback schemes. This work can be extended to analyze these probabilities in the case of a pulse radar with an arbitrary staggering sequence, and coexistence between MIMO pulsed radars and MIMO communication systems. A practical application of our work is to study the impact of pulsed interference power on the throughput and latency performance resulting from inaccurate S-CSI estimates.
\bibliographystyle{IEEEtran}
\bibliography{RadarNPI_references}
\ifCLASSOPTIONcaptionsoff
  \newpage
\fi

\end{document}